\documentclass{article}

\usepackage{PRIMEarxiv}

\usepackage[utf8]{inputenc} 
\usepackage[T1]{fontenc}    
\usepackage{hyperref}       
\usepackage{url}            
\usepackage{booktabs}       
\usepackage{amsfonts}       
\usepackage{nicefrac}       
\usepackage{microtype}      
\usepackage{lipsum}
\usepackage{fancyhdr}       
\usepackage{graphicx}       
\usepackage{tikz-cd}
\usepackage{amsmath,amssymb,amsthm,mathrsfs}
\theoremstyle{plain}
\newtheorem{theorem}{Theorem}

\newtheorem{corollary}[theorem]{Corollary}
\newtheorem{example}[theorem]{Example}
\theoremstyle{definition}

\newtheorem{remark}[theorem]{Remark}

\setcounter{MaxMatrixCols}{20}

\graphicspath{{media/}}     

\pagestyle{fancy}
\thispagestyle{empty}
\rhead{ \textit{ }} 

\fancyhead[LO]{ Disproving some theorems in Sharma and Chauhan \textit{et al.} (2018, 2021)}

\title{Disproving some theorems in Sharma and Chauhan \textit{et al.} (2018, 2021)
\thanks{\textit{This research was conducted at Université d'Artois, La Faculté Jean Perrin in Lens, and was funded by the Science, Technology \& Innovation Funding Authority (STDF); International Cooperation Grants, project number 49294. Ramy Takieldin would like to express his deepest gratitude to Professor André Leroy for his invaluable guidance throughout this project.
}} 
}

\author{
  Ramy Takieldin \\
  Faculty of Engineering, Ain Shams University, Cairo, Egypt\\
  Egypt University of Informatics, New Capital, Cairo, Egypt\\
  \texttt{ramy.farouk@eng.asu.edu.eg} \\
   \AND
   Patrick Solé \\
   I2M (CNRS, University of Aix-Marseille), 13009 Marseilles, France  \\
   \texttt{patrick.sole@telecom-paris.fr} \\
}

\begin{document}
\maketitle
\sloppy

\begin{abstract}
The main objective of this work is to show, through counterexamples, that some of the theorems presented in the papers of Sharma \textit{et al.} (2018)  and Chauhan \textit{et al.} ( 2021) are incorrect. Although they used these theorems to establish a sufficient condition for a multi-twisted (MT) code to be linear complementary dual (LCD), we show that this condition itself remains valid. We further improve this condition by removing the restrictions on the shift constants and relaxing the required coprimality condition. We show that compared to the previous condition, the modified condition is able to identify more LCD MT codes. Furthermore, without the need for a normalized set of generators, we develop a formula to determine the dimension of any $\rho$-generator MT code.
\end{abstract}

\keywords{ Multi-twisted code \and linear complementary dual \and Determinantal divisors \and Algebraic coding}

{\bf MSC:} 94B05, 94B60, 11T71

\section{Introduction}
Multi-twisted (MT) codes over a finite field $\mathbb{F}_q$ constitute a significant and comprehensive class of linear codes. This class contains several well-known subclasses, including cyclic, constacyclic, quasi-cyclic, quasi-twisted, and generalized quasi-cyclic codes. For some integer $\ell\geq 1$, let $0 \neq \lambda_i \in \mathbb{F}_q$ and $m_i \geq 1$ for $1 \leq i \leq \ell$. If $\Lambda = (\lambda_1, \lambda_2, \ldots, \lambda_\ell)$, then a $\Lambda$-MT code $\mathcal{C}$ with block lengths $(m_1, m_2, \ldots, m_\ell)$ is defined in \cite[Definition 3.1]{Sharma2018} as a linear code of length $n = m_1 + m_2 + \cdots + m_\ell$ that remains invariant under the $\Lambda$-MT linear transformation 
\begin{equation*}\begin{split}
T_{\Lambda}: &\left(c_{1,0}, c_{1,1}, \ldots, c_{1, m_1-1} ; c_{2,0}, c_{2,1}, \ldots, c_{2, m_2-1} ; \ldots ; c_{\ell, 0}, c_{\ell, 1}, \ldots, c_{\ell, m_{\ell}-1}\right)\mapsto\\
&\left(\lambda_1 c_{1, m_1-1},c_{1,0}, \ldots, c_{1, m_1-2} ; \lambda_2 c_{2, m_2-1}, c_{2,0}, \ldots, c_{2, m_2-2} ; \ldots ; \lambda_{\ell} c_{\ell, m_{\ell}-1}, c_{\ell, 0}, \ldots, c_{\ell, m_{\ell}-2}\right).
\end{split}\end{equation*}
Throughout this paper, we adopt the same notations as in \cite{Sharma2018, Chauhan2021}. Thus, $\mathcal{C}$ denotes a $\Lambda$-MT code over $\mathbb{F}_q$ with block lengths $\left(m_1, m_2, \ldots, m_\ell\right)$. The Euclidean dual $\mathcal{C}^\perp$ of $\mathcal{C}$ is a $\left(\lambda_1^{-1}, \lambda_2^{-1}, \ldots, \lambda_\ell^{-1}\right)$-MT code with the same block lengths. By using polynomial representation for blocks, $\mathcal{C}$ can be regarded as an $\mathbb{F}_q[x]$-submodule of the $\Lambda$-MT module 
$$V=\bigoplus_{i=1}^\ell\frac{\mathbb{F}_q[x]}{\langle x^{m_i}-\lambda_i\rangle}=\frac{\mathbb{F}_q[x]}{\langle x^{m_1}-\lambda_1\rangle}\oplus \frac{\mathbb{F}_q[x]}{\langle x^{m_2}-\lambda_2\rangle}\oplus \cdots \oplus \frac{\mathbb{F}_q[x]}{\langle x^{m_\ell}-\lambda_\ell\rangle}.$$ 
If $\mathcal{C}$ is generated as an $\mathbb{F}_q[x]$-module by the set $\left\{\mathbf{g}_1, \mathbf{g}_2, \ldots, \mathbf{g}_\rho\right\} \subseteq V$, then $\mathcal{C}$ is referred to as a $\rho$-generator MT code. For $1 \leq i \leq \ell$, let $\pi_i$ denote the projection of $V$ onto $\frac{\mathbb{F}_q[x]}{\langle x^{m_i}-\lambda_i\rangle}$. It is noteworthy that $\pi_i\left(\mathcal{C}\right)\subseteq \frac{\mathbb{F}_q[x]}{\langle x^{m_i}-\lambda_i\rangle}$ is a $\lambda_i$-constacyclic code of length $m_i$. The generator polynomial of $\pi_i\left(\mathcal{C}\right)$ is $g_i\left(x\right)$, and we write $\pi_i\left(\mathcal{C}\right) = \langle g_i\left(x\right) \rangle$, where 
$$g_i\left(x\right)=\mathrm{gcd}\left\{\left(x^{m_i}-\lambda_i\right),\pi_i\left(\mathbf{g}_1\right),\pi_i\left(\mathbf{g}_2\right),\ldots,\pi_i\left(\mathbf{g}_\rho\right)\right\}.$$

In \cite[Section 5]{Sharma2018}, sufficient conditions are provided for MT codes over a finite field $\mathbb{F}_q$ to be LCD, self-orthogonal, and dual-containing. These conditions are proven under the assumption that the block lengths $m_i$ are coprime to $q$. Later, in \cite[Remark 3.1(b)]{Chauhan2021}, it is claimed that the aforementioned conditions remain valid for arbitrary block lengths not necessarily coprime to $q$. The main objective of this paper is to disprove these theorems by presenting counterexamples. In Section \ref{Counter}, we provide several numerical examples that satisfy all the hypotheses of these theorems but contradict their conclusions. These counterexamples demonstrate that the theorems are incorrect, both for block lengths that are coprime to $q$ \cite{Sharma2018} and for block lengths not coprime to $q$ \cite{Chauhan2021}. Specifically, \cite[Theorem 5.5]{Sharma2018} classifies an MT code as LCD, self-orthogonal, dual-containing, or self-dual based on the code dimension, denoted by $\mathrm{dim}_{\mathbb{F}_q}\mathcal{C}$. However, our numerical examples reveal that code dimension alone is insufficient for such classification. Indeed, for different code dimensions, we show the existence of MT codes that are neither LCD, self-orthogonal, nor dual-containing, see Examples \ref{ex-1}, \ref{ex2}, \ref{ex5}, and \ref{ex6}.

In \cite[Corollary 5.1]{Sharma2018}, a sufficient condition for MT codes to be LCD was proven as a direct consequence of \cite[Theorem 5.6]{Sharma2018}. Although, as shown in Section \ref{Counter}, \cite[Theorem 5.6]{Sharma2018} is incorrect, we prove that \cite[Corollary 5.1]{Sharma2018} remains valid. In Theorem \ref{corr2}, we prove a more general form of \cite[Corollary 5.1]{Sharma2018}, offering a significantly less restrictive condition for an MT code to be LCD. Specifically, our generalization improves upon \cite[Corollary 5.1]{Sharma2018} in three key ways. First, we replace the coprimality condition in \cite[Corollary 5.1]{Sharma2018} with a less restricting condition. Second, we eliminate the need for the shift constants $\lambda_i$ to satisfy $\lambda_i \neq \lambda_i^{-1}$ for each $1 \leq i \leq \ell$. Third, we demonstrate that, under the proposed coprimality condition, both the MT code and its dual can be expressed as a direct sum of constacyclic codes, whereas \cite[Theorem 5.2]{Sharma2018} asserts that only the dual can be expressed as a direct sum. In addition, Theorem \ref{corr2} does not impose the restriction that the code block lengths $m_i$ be coprime to $q$. Examples \ref{ex13} and \ref{ex14} illustrate that Theorem \ref{corr2} can identify more LCD codes than \cite[Corollary 5.1]{Sharma2018}.

Finally, in Section \ref{Sdimension}, we aim to strengthen \cite[Theorem 5.1(b)]{Sharma2018} and \cite[Corollary 3.1]{Chauhan2021}. Specifically, \cite[Theorem 5.1(b)]{Sharma2018} proposes a formula to determine the dimension of a $\rho$-generator MT code, but this formula is valid only when $\rho = 1$. The authors show in \cite[Example 5.1]{Sharma2018} that this formula fails when $\rho \geq 2$. Although an alternative formula was suggested in \cite[Corollary 3.1]{Chauhan2021} to determine the dimension of a $\rho$-generator MT code for any $\rho \geq 1$, it has the drawback of requiring a normalized generating set for the MT code. In Theorem \ref{dimension}, we prove a dimension formula that generalizes \cite[Theorem 5.1(b)]{Sharma2018} for any $\rho \geq 1$, without the need to obtain a normalized generating set.

The remaining sections are organized as follows. Section \ref{Counter} presents examples that contradict \cite[Theorems 5.5 and 5.6]{Sharma2018}, both for block lengths that are coprime to $q$ and those that are not. Section \ref{Coprimality} establishes our modified condition for an MT code to be LCD. Section \ref{Sdimension} provides the proof of the dimension formula for any $\rho$-generator MT code.

\section{Counterexamples to \cite[Section 5]{Sharma2018} and \cite[Remark 3.1(b)]{Chauhan2021}}
\label{Counter}
In this section, we present examples that disprove \cite[Theorems 5.5 and 5.6]{Sharma2018}. The first subsection focuses on the case where the block lengths $m_i$ are coprime to $q$. In contrast, the second subsection considers the case where the block lengths $m_i$ are not coprime to $q$. These examples also show that the code dimension alone is insufficient for classifying an MT code as LCD, self-orthogonal, or dual-containing. To simplify these examples, we use two representations for each MT code: first, as a linear code with a generator matrix, and second, as an $\mathbb{F}_q[x]$-module with generators $\left\{\mathbf{g}_1, \mathbf{g}_2, \ldots, \mathbf{g}_\rho\right\}$.

\subsection{$m_i$ coprime to $q$}
The following example contradicts \cite[Theorem 5.5(a)]{Sharma2018}, which states that any $(\lambda_1, \lambda_2, \ldots, \lambda_\ell)$-MT code with block lengths $(m_1, m_2, \ldots, m_\ell)$, where $\lambda_i \neq \lambda_i^{-1}$ for $1 \leq i \leq \ell$, is LCD if either $\mathrm{dim}_{\mathbb{F}_q}\mathcal{C}$ or $\mathrm{dim}_{\mathbb{F}_q}\mathcal{C}^\perp$ is smaller than $\min{m_i}$, the smallest block length.

\begin{example}
\label{ex-1}
Consider the MDS linear code $\mathcal{C}$ over $\mathbb{F}_5$ of length $n=12$, dimension $k=11$, minimum distance $d_{\mathrm{min}}=2$, and generator matrix
\begin{equation*}
G=\begin{bmatrix}
 1&0&0&0&0&0&0&0&0&0&0&3\\
 0&1&0&0&0&0&0&0&0&0&0&4\\
 0&0&1&0&0&0&0&0&0&0&0&2\\
 0&0&0&1&0&0&0&0&0&0&0&4\\
 0&0&0&0&1&0&0&0&0&0&0&2\\
 0&0&0&0&0&1&0&0&0&0&0&1\\
 0&0&0&0&0&0&1&0&0&0&0&3\\
 0&0&0&0&0&0&0&1&0&0&0&4\\
 0&0&0&0&0&0&0&0&1&0&0&2\\
 0&0&0&0&0&0&0&0&0&1&0&1\\
 0&0&0&0&0&0&0&0&0&0&1&3
\end{bmatrix}.
\end{equation*}
In fact, $\mathcal{C}$ is the $\left(2,3\right)$-MT code over $\mathbb{F}_5$ with block lengths $\left(3,9\right)$ and generators 
\begin{equation*}
\begin{split}
\mathbf{g}_1&=\left(1+4x+3x^2 , 4+2x+3x^2+x^3+x^4+x^5+x^6+3x^8\right),\\
\mathbf{g}_2&=\left( 1+4x, 3+4x^3+2x^6+ 2x^7+x^8 \right).
\end{split}
\end{equation*}
That is, $\ell=2$, $m_1=3$, $m_2=9$, $\lambda_1=2$, and $\lambda_2=3$. We have $m_i$ is coprime to $q$, $\lambda_i\ne\lambda_i^{-1}$ for $1\le i\le 2$, and $\mathrm{dim}_{\mathbb{F}_q}\mathcal{C}^\perp=1 < \min\{m_i\}=3$. Then \cite[Theorem 5.5(a)]{Sharma2018} implies that $\mathcal{C}$ is an LCD code. But $\mathcal{C}$ is not LCD because of any of the following reasons:
\begin{enumerate}
\item By \cite[Theorem 2.4]{Liu2018}, $\mathcal{C}$ is not LCD because $GG^T$ is singular.
\item The non-zero codeword $\left(  1,3,4,3,4,2,1,3,4,2,1,3\right)$ spans the one-dimensional code $\mathcal{C}\cap\mathcal{C}^\perp$.
\end{enumerate}
\end{example}

The following example contradicts \cite[Theorem 5.5(b-c)]{Sharma2018}. In particular, \cite[Theorem 5.5(b)]{Sharma2018} states that any $(\lambda_1, \lambda_2, \ldots, \lambda_\ell)$-MT code $\mathcal{C}$ with block lengths $(m_1, m_2, \ldots, m_\ell)$, where $\lambda_i \neq \lambda_i^{-1}$ for $1 \leq i \leq \ell$, is LCD or self-orthogonal if $\mathrm{dim}_{\mathbb{F}_q}\mathcal{C}=\min{m_i}$, the smallest block length. While \cite[Theorem 5.5(c)]{Sharma2018} states that $\mathcal{C}$ is LCD or dual-containing if $\mathrm{dim}_{\mathbb{F}_q}\mathcal{C}^\perp=\min{m_i}$.

\begin{example}
\label{ex2}
Consider the linear code $\mathcal{C}$ over $\mathbb{F}_9$ (= $\mathbb{F}_3(\omega)$ with $\omega^2+2\omega+2=0 $) of length $n=16$, dimension $k=4$, minimum distance $d_{\mathrm{min}}=10$, and generator matrix
\begin{equation*}
G=\begin{bmatrix}
   1 & 0 & 0 & 0&\omega^3&\omega^7&\omega^6 & 0&\omega^7&\omega^5&\omega^7&\omega^5 & 1&\omega^6 & 1&\omega^6\\
   0 & 1 & 0 & 0 & 0&\omega^3&\omega^7&\omega^6 & 2&\omega^7&\omega^5&\omega^7&\omega^5 & 1&\omega^6 & 1\\
   0 & 0 & 1 & 0&\omega^5 & 0&\omega^3&\omega^7&\omega^6 & 2&\omega^7&\omega^5&\omega^7&\omega^5 & 1&\omega^6\\
   0 & 0 & 0 & 1&\omega^6&\omega^5 & 0&\omega^3 & 2&\omega^6 & 2&\omega^7&\omega^5&\omega^7&\omega^5 & 1
\end{bmatrix}.
\end{equation*}
In fact, $\mathcal{C}$ is a $\left(\omega^7, \omega^7, \omega^6\right)$-MT code over $\mathbb{F}_9$ with block lengths $\left(4,4,8\right)$ and generated by
\begin{equation*}
\begin{split}
\mathbf{g}&=\left(1 ,   \omega^3 + \omega^7 x +\omega^6 x^2 ,   \omega^7 + \omega^5 x  + \omega^7 x^2 + \omega^5 x^3 + x^4+ \omega^6 x^5 + x^6 +\omega^6 x^7\right).
\end{split}
\end{equation*}
That is, $\ell=3$, $m_1=m_2=4$, $m_3=8$, $\lambda_1=\lambda_2=\omega^7$, and $\lambda_3=\omega^6$. We have $m_i$ is coprime to $q$, $\lambda_i\ne\lambda_i^{-1}$ for $1\le i\le 3$, and $\mathrm{dim}_{\mathbb{F}_9}\mathcal{C}=4 =\min\{m_i\}$. Then \cite[Theorem 5.5(b)]{Sharma2018} implies that $\mathcal{C}$ is either an LCD or a self-orthogonal code, which is wrong. In fact, \cite[Theorem 2.4]{Liu2018} shows that $\mathcal{C}$ is not LCD because  
\begin{equation*}
GG^T=\begin{bmatrix}
 \omega^5&\omega^7&  0&\omega^3\\
 \omega^7&\omega^7 & 1&\omega^5\\
   0 & 1&\omega^7 & 0\\
 \omega^3&\omega^5 & 0&\omega^1
\end{bmatrix}
\end{equation*}
is singular. But $GG^T$ is not the zero matrix, hence $\mathcal{C}$ is not self-orthogonal. In fact, machine computation shows that the hull is one-dimensional spanned by the vector

$$[  1,   1, \omega^7, \omega^3,   0,   1, \omega^7,   0,   \omega,   0,   0, \omega^7, \omega^2,   0,   0,   1].$$

We also remark that $\mathcal{C}^\perp$ satisfies all conditions of \cite[Theorem 5.5(c)]{Sharma2018}, i.e., $\mathrm{dim}_{\mathbb{F}_9}\left(\mathcal{C}^\perp\right)^\perp=4 =\min\{m_i\}$. But $\mathcal{C}^\perp$ is neither LCD nor dual-containing, because $\mathcal{C}$ is neither LCD nor self-orthogonal. This contradicts \cite[Theorem 5.5(c)]{Sharma2018}.
\end{example}

The following example contradicts \cite[Theorem 5.6]{Sharma2018}, which states that any $(\lambda_1, \lambda_2, \ldots, \lambda_\ell)$-MT code $\mathcal{C}$ with block lengths $(m_1, m_2, \ldots, m_\ell)$, where $\lambda_i \neq \lambda_i^{-1}$ for $1 \leq i \leq \ell$, is LCD if $\pi_i\left(\mathcal{C}\right)\ne\langle 1\rangle$ or $\pi_i\left(\mathcal{C}^\perp\right)\ne\langle 1\rangle$ for each $1\le i\le \ell$.
\begin{example}
\label{ex-9}
Consider the same MT code $\mathcal{C}$ used in Example \ref{ex-1}. We have $\mathcal{C}^\perp$ is an MDS $\left(3,2\right)$-MT code over $\mathbb{F}_5$ with block lengths $\left(3,9\right)$, dimension $n-k=1$, minimum distance $d^\perp_{\mathrm{min}}=12$, and generator matrix
\begin{equation*}
H=\begin{bmatrix}
 1&3&4&3&4&2&1&3&4&2&1&3
\end{bmatrix}.
\end{equation*}
According to \cite[Theorem 5.6]{Sharma2018}, since $\pi_1\left(\mathcal{C}^\perp\right)=\langle 4+2x+x^2 \rangle\ne\langle 1\rangle$ and $\pi_2\left(\mathcal{C}^\perp\right)=\langle 1+3x+4x^2+2x^3+x^4+3x^5+4x^6+2x^7+x^8 \rangle\ne\langle 1\rangle$, $\mathcal{C}$ is LCD. However, $\mathcal{C}$ is not LCD as we showed in Example \ref{ex-1}. We remark that $\pi_1\left(\mathcal{C}^\perp\right)$ and $\pi_2\left(\mathcal{C}^\perp\right)$ are MDS codes.
\end{example}

\subsection{$m_i$ not coprime to $q$}
In \cite[Remark 3.1(b)]{Chauhan2021}, it is claimed that the theorems in \cite[Section 5]{Sharma2018} hold even for arbitrary block lengths, not necessarily coprime to $q$. In this subsection, we demonstrate that this assertion is incorrect. The following example contradicts \cite[Theorem 5.5(a)]{Sharma2018} for block lengths not coprime to $q$.

\begin{example}
\label{ex5}
Consider the linear code $\mathcal{C}$ over $\mathbb{F}_5$ of length $n=10$, dimension $k=2$, minimum distance $d_{\mathrm{min}}=8$, and generator matrix
\begin{equation*}
G=\begin{bmatrix}
 1&0&1&4&2&0&2&3&4&4\\
 0&1&4&2&2&2&3&4&4&0
\end{bmatrix}.
\end{equation*}
In fact, $\mathcal{C}$ is a $\left(3,3\right)$-MT code over $\mathbb{F}_5$ with block lengths $\left(5,5\right)$ generated by
\begin{equation*}
\begin{split}
\mathbf{g}&=\left( 3+ 2x + x^2 +x^3  ,  4+ 2x+ 2x^2 +2x^4 \right).
\end{split}
\end{equation*}
That is, $\ell=2$, $m_1=m_2=q=5$, $\lambda_1=\lambda_2=3$. We have $m_i$ is not coprime to $q$, $\lambda_i\ne\lambda_i^{-1}$ for $1\le i\le 2$, and $\mathrm{dim}_{\mathbb{F}_q}\mathcal{C}=2 < \min\{m_i\}=5$. As been mentioned in \cite[Remark 3.1(b)]{Chauhan2021}, such conditions imply that $\mathcal{C}$ is an LCD code. This is wrong because of any of the following reasons:
\begin{enumerate}
\item By \cite[Theorem 2.4]{Liu2018}, $\mathcal{C}$ is not LCD because $GG^T$ is singular.
\item The non-zero codeword $\left(  0, 1, 4, 2, 2, 2, 3, 4, 4, 0  \right)\in\mathcal{C}\cap\mathcal{C}^\perp$. In fact it constitutes a basis of the hull.
\end{enumerate}
\end{example}

The following example contradicts \cite[Theorem 5.5(b-c)]{Sharma2018} for block lengths not coprime to $q$.

\begin{example}
\label{ex6}
Consider the linear code $\mathcal{C}$ over $\mathbb{F}_4$ of length $n=12$, dimension $k=4$, minimum distance $d_{\mathrm{min}}=6$, and generator matrix
\begin{equation*}
G=\begin{bmatrix}
   1 & 0 & 0 & 0&\omega&\omega & 1&\omega^2&\omega^2&\omega^2&\omega & 1\\
   0 & 1 & 0 & 0&\omega&\omega&\omega & 1&\omega^2&\omega^2&\omega^2&\omega\\
   0 & 0 & 1 & 0&\omega^2&\omega&\omega&\omega & 1&\omega^2&\omega^2&\omega^2\\
   0 & 0 & 0 & 1 & 1&\omega^2&\omega&\omega&\omega & 1&\omega^2&\omega^2
\end{bmatrix}.
\end{equation*}
In fact, $\mathcal{C}$ is an $\left(\omega^2,\omega\right)$-MT code over $\mathbb{F}_4$ with block lengths $\left(4,8\right)$ and generated by
\begin{equation*}
\begin{split}
\mathbf{g}&=\left(1 ,  \omega+ \omega x + x^2 + \omega^2 x^3 + \omega^2 x^4 + \omega^2 x^5 + \omega x^6 +x^7\right).
\end{split}
\end{equation*}
That is, $\ell=2$, $m_1=4$, $m_2=8$, $\lambda_1=\omega^2$, and $\lambda_2=\omega$. We have $m_i$ is not coprime to $q$, $\lambda_i\ne\lambda_i^{-1}$ for $1\le i\le 2$, and $\mathrm{dim}_{\mathbb{F}_q}\mathcal{C}=4 =\min\{m_i\}$. As been mentioned in \cite[Remark 3.1(b)]{Chauhan2021}, having $\mathcal{C}$ satisfying all conditions of \cite[Theorem 5.5(b)]{Sharma2018} which implies is either an LCD or a self-orthogonal code, which is wrong. In fact, \cite[Theorem 2.4]{Liu2018} shows that $\mathcal{C}$ is not LCD because $GG^T$ is singular. But $GG^T$ is not the zero matrix, hence $\mathcal{C}$ is not self-orthogonal. Machine calculations show that the hull is one-dimensional spanned by
$$[  1,   0,   0, \omega^2,   1,   0,   0,   \omega,   \omega,   0,   0, \omega^2].$$

We remark that $\mathcal{C}^\perp$ satisfies all conditions of \cite[Theorem 5.5(c)]{Sharma2018}, then \cite[Remark 3.1(b)]{Chauhan2021} implies that $\mathcal{C}^\perp$ is LCD or dual-containing. This is not true since $\mathcal{C}$ is neither LCD nor self-orthogonal.
\end{example}

The following example contradicts \cite[Theorem 5.6]{Sharma2018} for block lengths not coprime to $q$.
\begin{example}
\label{ex10}
Consider the $\left(3,3\right)$-MT code $\mathcal{C}$ over $\mathbb{F}_5$ with block lengths $\left(5,5\right)$ presented in Example \ref{ex5}. We noted that $\pi_1\left(\mathcal{C}\right)=\pi_2\left(\mathcal{C}\right)=\langle (x+2)^3 \rangle$ are $[5,2,4]$ MDS $3$-constacyclic codes. According to \cite[Theorem 5.6]{Sharma2018} and \cite[Remark 3.1(b)]{Chauhan2021}, since $\pi_1\left(\mathcal{C}\right)\ne\langle 1\rangle$ and $\pi_2\left(\mathcal{C}\right) \ne\langle 1\rangle$, $\mathcal{C}$ is LCD. However, $\mathcal{C}$ is not LCD as we showed in Example \ref{ex5}. 
\end{example}

\section{Sufficient condition for LCD}
\label{Coprimality}
In the previous section, we showed by counterexamples that \cite[Theorem 5.5]{Sharma2018} and \cite[Theorem 5.6]{Sharma2018} are wrong. In addition, these examples emphasize that imposing a condition on the dimension of the MT code or its dual to guarantee being LCD is impossible. For instance, the MT code given in Example \ref{ex-1} has dimension greater than $\min\{m_i\}$, while its dual has dimension smaller than $\min\{m_i\}$, and that in Example \ref{ex2} has dimension equal to $\min\{m_i\}$, and all of which are not LCD. In this section, we aim to present a sufficient condition for an MT code to be LCD. In \cite[Corollary 5.1]{Sharma2018}, such a sufficient condition for an MT to be LCD is given. However, \cite[Corollary 5.1]{Sharma2018} has been proven as a consequence of \cite[Theorem 5.6]{Sharma2018}. Although Examples \ref{ex-9} and \ref{ex10} showed that \cite[Theorem 5.6]{Sharma2018} is wrong, we will prove that \cite[Corollary 5.1]{Sharma2018} is correct in a more general context. In fact, we will prove a sufficient condition for an MT code to be LCD which is much less restricting than that of \cite[Corollary 5.1]{Sharma2018}. To this end, we start by proving a more general result than \cite[Theorem 5.2]{Sharma2018} with a less restricting coprimality condition. Specifically, \cite[Theorem 5.2]{Sharma2018} shows that if $x^{m_1}-\lambda_1, x^{m_2}-\lambda_2, \cdots, x^{m_{\ell}}-\lambda_{\ell}$ are pairwise coprime, then the dual of the a $(\lambda_1,\ldots,\lambda_\ell)$-MT code of block lengths $(m_1,\ldots,m_\ell)$ is the direct sum of $\ell$ constacyclic codes. In the following result, we show that both $\mathcal{C}$ and its dual are the direct sum of $\ell$ constacyclic codes under a less restricting coprimality condition.

\begin{theorem}
\label{directprod}
Let $\Lambda=\left(\lambda_1,\lambda_2,\ldots,\lambda_\ell\right)$, where $\lambda_i$ is a non-zero element of $\mathbb{F}_q$ for $1\le i\le \ell$. Let $\mathcal{C}$ (or $\mathcal{C}^\perp$) be a $\Lambda$-MT code of block lengths $\left(m_1,m_2,\ldots,m_\ell\right)$ generated by $\left\{\mathbf{g}_1, \mathbf{g}_2, \ldots, \mathbf{g}_\rho\right\}$. For $1\le i\le \ell$, let $g_i\left(x\right)=\mathrm{gcd}\left\{\left(x^{m_i}-\lambda_i\right),\pi_i\left(\mathbf{g}_1\right),\pi_i\left(\mathbf{g}_2\right),\ldots,\pi_i\left(\mathbf{g}_\rho\right)\right\}$. If 
$$\frac{x^{m_1}-\lambda_1}{g_1\left(x\right)}, \frac{x^{m_2}-\lambda_2}{g_2\left(x\right)}, \ldots, \frac{x^{m_\ell}-\lambda_\ell}{g_\ell\left(x\right)}$$ are pairwise coprime polynomials in $\mathbb{F}_q[x]$, then $\mathcal{C}$ and $\mathcal{C}^\perp$ are direct sums of $\ell$ constacyclic codes, in particular
\begin{equation*}
\begin{split}
\mathcal{C}&=\bigoplus_{i=1}^\ell \pi_i\left(\mathcal{C}\right)\\
\mathcal{C}^\perp&=\bigoplus_{i=1}^\ell \pi_i\left(\mathcal{C}\right)^\perp=\bigoplus_{i=1}^\ell \pi_i\left(\mathcal{C}^\perp\right).
\end{split}
\end{equation*}
\end{theorem}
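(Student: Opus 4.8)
\section*{Proof proposal}

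The plan is to prove the single decomposition $\mathcal{C}=\bigoplus_{i=1}^\ell\pi_i(\mathcal{C})$ first, and then derive both dual identities from it by a purely formal duality argument, so that the coprimality hypothesis is invoked only once. Throughout I would write $f_i=x^{m_i}-\lambda_i$ and $h_i=f_i/g_i$, so that $h_1,\dots,h_\ell$ are exactly the polynomials assumed pairwise coprime, and recall that $\pi_i(\mathcal{C})=\langle g_i\rangle$ is the ideal generated by $g_i$ in $R_i=\mathbb{F}_q[x]/\langle f_i\rangle$. One inclusion is free and needs no hypothesis: for any $\mathbf{c}\in\mathcal{C}$ we have $\mathbf{c}=(\pi_1(\mathbf{c}),\dots,\pi_\ell(\mathbf{c}))$ with $\pi_i(\mathbf{c})\in\pi_i(\mathcal{C})$, hence $\mathcal{C}\subseteq\bigoplus_i\pi_i(\mathcal{C})$. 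All the content lies in the reverse inclusion.

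This is where I would use coprimality. Since $h_1,\dots,h_\ell$ are pairwise coprime, the Chinese Remainder Theorem in $\mathbb{F}_q[x]$ furnishes, for each $j$, a polynomial $\epsilon_j$ with $\epsilon_j\equiv 1\pmod{h_j}$ and $\epsilon_j\equiv 0\pmod{h_i}$ for $i\neq j$. The key computation is that the module element $\epsilon_j\cdot\mathbf{c}$ acts as the projector onto the $j$-th block: writing a codeword $\mathbf{c}=(c_1,\dots,c_\ell)\in\mathcal{C}$ with $c_i\equiv g_i v_i\pmod{f_i}$, the relation $h_i\mid\epsilon_j$ for $i\neq j$ gives $f_i=g_i h_i\mid\epsilon_j c_i$, so the $i$-th component of $\epsilon_j\mathbf{c}$ vanishes, while $\epsilon_j=1+h_j w$ gives $\epsilon_j c_j\equiv c_j+f_j w v_j\equiv c_j\pmod{f_j}$. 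Hence $\epsilon_j\mathbf{c}=(0,\dots,0,c_j,0,\dots,0)$, and since $\mathcal{C}$ is an $\mathbb{F}_q[x]$-module this pure element lies in $\mathcal{C}$. As $c_j$ ranges over $\pi_j(\mathcal{C})$ and $j$ over $1,\dots,\ell$, all such pure elements lie in $\mathcal{C}$; since every element of $\bigoplus_i\pi_i(\mathcal{C})$ is a sum of such pure elements, this gives $\bigoplus_i\pi_i(\mathcal{C})\subseteq\mathcal{C}$ and hence $\mathcal{C}=\bigoplus_{i=1}^\ell\pi_i(\mathcal{C})$.

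For the dual I would invoke the standard fact that the Euclidean form on $\mathbb{F}_q^n$ splits as a sum over the $\ell$ blocks, so the dual of a block-wise product is the product of the block duals: $(\bigoplus_i D_i)^\perp=\bigoplus_i D_i^\perp$ for any $D_i\subseteq\mathbb{F}_q^{m_i}$. Applying this to $\mathcal{C}=\bigoplus_i\pi_i(\mathcal{C})$ immediately yields $\mathcal{C}^\perp=\bigoplus_{i=1}^\ell\pi_i(\mathcal{C})^\perp$, and each $\pi_i(\mathcal{C})^\perp$ is a $\lambda_i^{-1}$-constacyclic code, so $\mathcal{C}^\perp$ is a direct sum of $\ell$ constacyclic codes as claimed. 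The final equality is then formal: projecting the product $\bigoplus_i\pi_i(\mathcal{C})^\perp$ onto its $i$-th coordinate returns the factor, i.e.\ $\pi_i(\mathcal{C}^\perp)=\pi_i(\mathcal{C})^\perp$, whence $\bigoplus_i\pi_i(\mathcal{C}^\perp)=\bigoplus_i\pi_i(\mathcal{C})^\perp=\mathcal{C}^\perp$.

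I expect the only genuine obstacle to be the verification that $\epsilon_j$ acts as a true block projector on $\mathcal{C}$ — specifically, pinning down that the off-diagonal components vanish not merely modulo $h_i$ but modulo the full $f_i=g_i h_i$, which is exactly where the factorization $g_i\mid c_i$ coming from $c_i\in\pi_i(\mathcal{C})$ is indispensable. Everything afterwards (the free inclusion, the block-duality fact, and the final projection identity) is routine, and the coprimality hypothesis enters only through the existence of the idempotents $\epsilon_j$; in particular the dual decomposition needs no separate coprimality argument. As a sanity check one could also verify the result dimensionally, since $\dim_{\mathbb{F}_q}\bigoplus_i\pi_i(\mathcal{C})=\sum_i\deg h_i$, though this count is not required for the argument.
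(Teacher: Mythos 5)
Your proposal is correct and follows essentially the same route as the paper: the free inclusion $\mathcal{C}\subseteq\bigoplus_i\pi_i(\mathcal{C})$, then CRT applied to the pairwise coprime polynomials $\frac{x^{m_i}-\lambda_i}{g_i(x)}$ to build multipliers (your $\epsilon_j$, the paper's $f_j(x)$) that act as block projectors on codewords — with the same key observation that $g_i\mid\pi_i(\mathbf{c})$ upgrades vanishing modulo $\frac{x^{m_i}-\lambda_i}{g_i(x)}$ to vanishing modulo $x^{m_i}-\lambda_i$ — followed by the standard block-duality and projection identities for $\mathcal{C}^\perp$. The only cosmetic difference is that you establish membership of each pure block element $(0,\ldots,0,c_j,0,\ldots,0)$ and sum, whereas the paper assembles $\sum_i f_i(x)\,\mathbf{c}^{(i)}$ directly for a given target tuple; these are the same argument.
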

\begin{proof}
Clearly $\mathcal{C}$ is LCD if and only if $\mathcal{C}^\perp$ is LCD, and hence it suffices to prove the result for $\mathcal{C}$. Let $\mathbf{c}=\left(c_1\left(x\right),c_2\left(x\right),\ldots,c_\ell\left(x\right)\right)\in\mathcal{C}$. Then
$$\mathbf{c}=\sum_{i=1}^\ell \left(0,\ldots,0,\pi_i\left(\mathbf{c}\right),0,\ldots,0\right)\in \bigoplus_{i=1}^\ell \pi_i\left(\mathcal{C}\right).$$
Thus, $\mathcal{C}\subseteq \bigoplus_{i=1}^\ell \pi_i\left(\mathcal{C}\right)$. Conversely, let $\left(a_1\left(x\right),a_2\left(x\right),\ldots,a_\ell\left(x\right)\right)\in \bigoplus_{i=1}^\ell \pi_i\left(\mathcal{C}\right)$. For a fixed $i\in\{1,\ldots, \ell\}$, we have $a_i\left(x\right)\in \pi_i\left(\mathcal{C}\right)$, hence $a_i\left(x\right)= g_i\left(x\right) b_i\left(x\right)$ for some $b_i\left(x\right)\in\mathbb{F}_q[x]$. Moreover, there exists a codeword $\mathbf{c}^{(i)}\in\mathcal{C}$ such that $\pi_i\left(\mathbf{c}^{(i)}\right)=a_i\left(x\right)$. The Chinese remainder theorem ensures the existence a polynomial $f_i\left(x\right)\in\mathbb{F}_q[x]$ such that 
$$ f_i\left(x\right)\equiv 1 \pmod{\frac{x^{m_i}-\lambda_i}{g_i\left(x\right)}} \quad \text{and} \quad f_i\left(x\right)\equiv 0 \pmod{\frac{x^{m_j}-\lambda_j}{g_j\left(x\right)}} \ \forall j\ne i.$$
Since $g_i\left(x\right)| \pi_i\left(\mathbf{c}^{(i)}\right)$ and $g_j\left(x\right)| \pi_j\left(\mathbf{c}^{(i)}\right)$, we have 
\begin{equation*}\begin{split}
f_i\left(x\right)\pi_i\left(\mathbf{c}^{(i)}\right)&
\equiv \pi_i\left(\mathbf{c}^{(i)}\right) \pmod{x^{m_i}-\lambda_i} \text{ and }\\
f_i\left(x\right)\pi_j\left(\mathbf{c}^{(i)}\right)&
\equiv 0 \pmod{x^{m_j}-\lambda_j} \quad \forall j\ne i.
\end{split}\end{equation*}
Consequently,
\begin{equation}
\label{Indirectprod}
\begin{split}
\left(a_1\left(x\right),a_2\left(x\right),\ldots,a_\ell\left(x\right)\right)&=\sum_{i=1}^\ell \left(0,\ldots,0,a_i\left(x\right),0,\ldots,0\right)\\
&=\sum_{i=1}^\ell \left(0,\ldots,0,\pi_i\left(\mathbf{c}^{(i)}\right),0,\ldots,0\right)\\
&=\sum_{i=1}^\ell f_i\left(x\right) \mathbf{c}^{(i)} \in\mathcal{C},
\end{split}
\end{equation}
this is because $\mathcal{C}$ is an $\mathbb{F}_q[x]$-module, $f_i\left(x\right)\in\mathbb{F}_q[x]$, and $\mathbf{c}^{(i)} \in\mathcal{C}$ for each $1\le i\le \ell$. That is, $ \bigoplus_{i=1}^\ell \pi_i\left(\mathcal{C}\right)\subseteq\mathcal{C}$. Therefore,
\begin{equation*}
\mathcal{C}=\bigoplus_{i=1}^\ell \pi_i\left(\mathcal{C}\right)\quad \text{and}\quad\mathcal{C}^\perp=\bigoplus_{i=1}^\ell \pi_i\left(\mathcal{C}\right)^\perp.
\end{equation*}
Moreover, $$\pi_i\left(\mathcal{C}^\perp\right)=\pi_i\left(\bigoplus_{i=1}^\ell \pi_i\left(\mathcal{C}\right)^\perp\right)=\pi_i\left(\mathcal{C}\right)^\perp.$$
\end{proof}

Theorem \ref{directprod} not only weaken the coprimality requirement of \cite[Theorem 5.2]{Sharma2018}, but it shows that if such condition is satisfied for $\mathcal{C}$ or its dual then both are the direct sum of $\ell$ constacyclic codes. Now, a direct consequence of Theorem \ref{directprod} yields a sufficient condition for an MT code to be LCD which shows the correctness of \cite[Corollary 5.1]{Sharma2018} and moreover generalizes it.

\begin{corollary}
\label{corr}
Let $\Lambda=\left(\lambda_1,\lambda_2,\ldots,\lambda_\ell\right)$, where $0\ne\lambda_i\ne\lambda_i^{-1}$ for $1\le i\le \ell$. Let $\mathcal{C}$ (or $\mathcal{C}^\perp$) be a $\Lambda$-MT code of block lengths $\left(m_1,m_2,\ldots,m_\ell\right)$ generated by $\left\{\mathbf{g}_1, \mathbf{g}_2, \ldots, \mathbf{g}_\rho\right\}$. For $1\le i\le \ell$, let $g_i\left(x\right)=\mathrm{gcd}\left\{\left(x^{m_i}-\lambda_i\right),\pi_i\left(\mathbf{g}_1\right),\pi_i\left(\mathbf{g}_2\right), \ldots,\pi_i\left(\mathbf{g}_\rho\right)\right\}$. Assume 
$$\frac{x^{m_1}-\lambda_1}{g_1\left(x\right)}, \frac{x^{m_2}-\lambda_2}{g_2\left(x\right)}, \ldots, \frac{x^{m_\ell}-\lambda_\ell}{g_\ell\left(x\right)}$$ are pairwise coprime polynomials in $\mathbb{F}_q[x]$, then $\mathcal{C}$ is LCD.
\end{corollary}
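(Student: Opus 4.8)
The plan is to deduce the LCD property from the block decomposition furnished by Theorem \ref{directprod}, thereby reducing the problem to a single-block statement about constacyclic codes. The pairwise-coprimality hypothesis of the corollary is precisely the hypothesis of Theorem \ref{directprod}, so I would first invoke that theorem to write $\mathcal{C}=\bigoplus_{i=1}^\ell \pi_i(\mathcal{C})$ and $\mathcal{C}^\perp=\bigoplus_{i=1}^\ell \pi_i(\mathcal{C})^\perp$, where in both decompositions the $i$-th summand is supported on the $i$-th coordinate block. The extra hypothesis $0\ne\lambda_i\ne\lambda_i^{-1}$ will be used only at the single-block stage.

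Because these two direct sums are supported on the same disjoint blocks of coordinates, the intersection distributes over the decomposition: an element $(e_1,\dots,e_\ell)$ lies in $\mathcal{C}\cap\mathcal{C}^\perp$ exactly when $e_i\in \pi_i(\mathcal{C})\cap\pi_i(\mathcal{C})^\perp$ for every $i$, so that $\mathcal{C}\cap\mathcal{C}^\perp=\bigoplus_{i=1}^\ell\bigl(\pi_i(\mathcal{C})\cap\pi_i(\mathcal{C})^\perp\bigr)$. Hence it suffices to prove that each block code $\pi_i(\mathcal{C})$ is LCD.

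The core of the argument is therefore the single-block claim: a $\lambda_i$-constacyclic code of length $m_i$ with $\lambda_i\ne\lambda_i^{-1}$ satisfies $\pi_i(\mathcal{C})\cap\pi_i(\mathcal{C})^\perp=\{0\}$. To establish it I would use the standard description of the dual of a constacyclic code: writing $x^{m_i}-\lambda_i=g_i(x)h_i(x)$, the dual $\pi_i(\mathcal{C})^\perp$ is the $\lambda_i^{-1}$-constacyclic code generated by the reciprocal $h_i^*(x)$ of the check polynomial $h_i(x)$, and $h_i^*(x)$ divides $x^{m_i}-\lambda_i^{-1}$. Viewing both $\pi_i(\mathcal{C})$ and $\pi_i(\mathcal{C})^\perp$ as the sets of polynomials of degree less than $m_i$ that are divisible by $g_i$ and by $h_i^*$ respectively, any $f$ in the intersection is divisible by both generators. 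The decisive observation is that $\lambda_i\ne\lambda_i^{-1}$ forces $\gcd(x^{m_i}-\lambda_i,\,x^{m_i}-\lambda_i^{-1})=1$ (a common root $\alpha$ would give $\alpha^{m_i}=\lambda_i=\lambda_i^{-1}$), whence $\gcd(g_i,h_i^*)=1$ and $g_ih_i^*\mid f$. A degree count then finishes it, since $\deg(g_ih_i^*)=\deg g_i+\deg h_i=m_i>\deg f$ forces $f=0$.

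I expect the main obstacle to be pinning down this single-block claim cleanly, specifically getting the generator of the Euclidean dual right and checking that $h_i^*$ really has degree $\deg h_i$ so the degree count is exact. The latter holds because $x\nmid(x^{m_i}-\lambda_i)$, hence $h_i(0)\ne 0$ and reciprocation preserves the degree. Once the single-block LCD statement is secured, assembling the blocks via the decomposition above yields $\mathcal{C}\cap\mathcal{C}^\perp=\{0\}$ at once, which both recovers the correctness of \cite[Corollary 5.1]{Sharma2018} and supplies the promised generalization.
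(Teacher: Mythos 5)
Your proof is correct and follows essentially the same route as the paper: both invoke Theorem \ref{directprod} to reduce the question blockwise to the statement that a $\lambda_i$-constacyclic code with $\lambda_i\ne\lambda_i^{-1}$ is LCD. The only difference is that the paper cites \cite[Corollary 3.3]{Liu2018} for that single-block fact, whereas you prove it directly via $\gcd\left(x^{m_i}-\lambda_i,x^{m_i}-\lambda_i^{-1}\right)=1$ and a degree count; your self-contained argument is sound.
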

\begin{proof}
From Theorem \ref{directprod}, $\pi_i\left(\mathcal{C}^\perp\right)=\pi_i\left(\mathcal{C}\right)^\perp$. Let $\left(c_1\left(x\right),c_2\left(x\right),\ldots,c_\ell\left(x\right)\right)\in\mathcal{C}\cap\mathcal{C}^\perp$. Then, for each $1\le i\le \ell$, we have $c_i\left(x\right)\in\pi_i\left(\mathcal{C}\right)\cap\pi_i\left(\mathcal{C}^\perp\right)=\pi_i\left(\mathcal{C}\right)\cap\pi_i\left(\mathcal{C}\right)^\perp$.  But $\pi_i\left(\mathcal{C}\right)$ is a $\lambda_i$-constacyclic of length $m_i$ with $\lambda_i\ne\lambda_i^{-1}$, hence $\pi_i\left(\mathcal{C}\right)$ is LCD by \cite[Corollary 3.3]{Liu2018}. That is, $c_i\left(x\right)=0$ for each $1\le i\le \ell$. 
\end{proof}

It is clear now that \cite[Corollary 5.1]{Sharma2018} is just a special case of Corollary \ref{corr}, since $\frac{x^{m_i}-\lambda_i}{g_i\left(x\right)}$ are pairwise coprime if $x^{m_i}-\lambda_i$ are pairwise coprime, but the converse is not true, see Example \ref{ex13}. Thus, Corollary \ref{corr} provides a sufficient condition for an MT code to be LCD with a less restricting coprimality condition than \cite[Corollary 5.1]{Sharma2018}. We aim at adjusting this condition to treat the case if $0\ne\lambda_i = \lambda_i^{-1}$ for some $1\le i\le \ell$. In addition, we don't restrict $m_i$ to be coprime to $q$. Thus, the following result generalizes Corollary \ref{corr} by not requiring that $0\ne\lambda_i\ne\lambda_i^{-1}$ while keeping the less restricting coprimality condition. We remark that a polynomial $g\left(x\right)$ is self-reciprocal if it is associate to its reciprocal polynomial $g^\star\left(x\right)=x^{\deg{g}}g\left(\frac{1}{x}\right)$, i.e., there exists a nonzero constant $\alpha\in\mathbb{F}_q$ such that  $g\left(x\right)=\alpha x^{\deg{g}}g\left(\frac{1}{x}\right)$.
\begin{theorem}
\label{corr2}
Let $\Lambda=\left(\lambda_1,\lambda_2,\ldots,\lambda_\ell\right)$, where $\lambda_i$ is a non-zero element of $\mathbb{F}_q$ for $1\le i\le \ell$. Let $\mathcal{C}$ (or $\mathcal{C}^\perp$) be a $\Lambda$-MT code of block lengths $\left(m_1,m_2,\ldots,m_\ell\right)$ generated by $\left\{\mathbf{g}_1, \mathbf{g}_2, \ldots, \mathbf{g}_\rho\right\}$. For $1\le i\le \ell$, let $g_i\left(x\right)=\mathrm{gcd}\left\{\left(x^{m_i}-\lambda_i\right),\pi_i\left(\mathbf{g}_1\right),\pi_i\left(\mathbf{g}_2\right), \ldots,\pi_i\left(\mathbf{g}_\rho\right)\right\}$. If 
$$\frac{x^{m_1}-\lambda_1}{g_1\left(x\right)}, \frac{x^{m_2}-\lambda_2}{g_2\left(x\right)}, \ldots, \frac{x^{m_\ell}-\lambda_\ell}{g_\ell\left(x\right)}$$ are pairwise coprime polynomials in $\mathbb{F}_q[x]$, then $\mathcal{C}$ is LCD if and only if $g_i\left(x\right)$ is self-reciprocal coprime to $\frac{x^{m_i}-\lambda_i}{g_i\left(x\right)}$ for every $i$ with $\lambda_i^2=1$. 
\end{theorem}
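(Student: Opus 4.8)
The plan is to reduce the global LCD question to a per-block question about constacyclic codes, and then to settle each block by analyzing its generator polynomial through the reciprocation involution. First I would invoke Theorem \ref{directprod}: the hypothesis here is exactly its coprimality assumption, so we obtain $\mathcal{C}=\bigoplus_i \pi_i(\mathcal{C})$ and $\mathcal{C}^\perp=\bigoplus_i \pi_i(\mathcal{C})^\perp$ together with $\pi_i(\mathcal{C}^\perp)=\pi_i(\mathcal{C})^\perp$. A codeword lies in $\mathcal{C}\cap\mathcal{C}^\perp$ precisely when each of its components lies in $\pi_i(\mathcal{C})\cap\pi_i(\mathcal{C})^\perp$, so
\begin{equation*}
\mathcal{C}\cap\mathcal{C}^\perp=\bigoplus_{i=1}^\ell\bigl(\pi_i(\mathcal{C})\cap\pi_i(\mathcal{C})^\perp\bigr),
\end{equation*}
and therefore $\mathcal{C}$ is LCD if and only if every projection $\pi_i(\mathcal{C})$ is LCD as a $\lambda_i$-constacyclic code of length $m_i$.

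Next I would split the indices according to whether $\lambda_i^2=1$. When $\lambda_i^2\neq 1$ we have $\lambda_i\neq\lambda_i^{-1}$, so $\pi_i(\mathcal{C})$ is automatically LCD by \cite[Corollary 3.3]{Liu2018}; such blocks impose no constraint, which is why the statement restricts attention to the indices with $\lambda_i^2=1$. It then remains to characterize, for a fixed $i$ with $\lambda_i^2=1$ (so $\lambda_i=\pm 1$ and $x^{m_i}-\lambda_i$ is itself self-reciprocal), exactly when the $\lambda_i$-constacyclic code $\langle g_i\rangle$ is LCD.

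For this characterization I would use the standard description of the dual of a constacyclic code: writing $h_i=(x^{m_i}-\lambda_i)/g_i$ for the check polynomial and $h_i^\star$ for its reciprocal, one has $\pi_i(\mathcal{C})^\perp=\langle h_i^\star\rangle$ inside the \emph{same} ambient ring $\mathbb{F}_q[x]/\langle x^{m_i}-\lambda_i\rangle$ (it is $\lambda_i=\lambda_i^{-1}$ that places the dual in the same ring). Since this ring is a quotient of a PID, the intersection of two such ideals is generated by the least common multiple, so $\pi_i(\mathcal{C})\cap\pi_i(\mathcal{C})^\perp=\langle \operatorname{lcm}(g_i,h_i^\star)\rangle$, which vanishes if and only if $\operatorname{lcm}(g_i,h_i^\star)=x^{m_i}-\lambda_i$ up to a unit. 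I would then factor $x^{m_i}-\lambda_i=\prod_j p_j^{e_j}$ into distinct monic irreducibles and record the involution $\sigma$ on the index set induced by $p_j\mapsto p_j^\star$; because $x^{m_i}-\lambda_i$ is self-reciprocal, paired factors satisfy $e_{\sigma(j)}=e_j$. Writing $g_i=\prod_j p_j^{a_j}$, one computes $h_i^\star=\prod_j p_j^{e_j-a_{\sigma(j)}}$, and the lcm condition becomes the purely combinatorial requirement $\max\bigl(a_j,\,e_j-a_{\sigma(j)}\bigr)=e_j$ for every $j$, equivalently $a_j=e_j$ or $a_{\sigma(j)}=0$ for every $j$.

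Finally I would verify that this combinatorial condition is equivalent to ``$g_i$ is self-reciprocal and $\gcd(g_i,h_i)=1$''. Self-reciprocality of $g_i$ reads $a_{\sigma(j)}=a_j$ for all $j$, while $\gcd(g_i,h_i)=1$ reads $a_j\in\{0,e_j\}$ for all $j$; a short case analysis on each orbit $\{j,\sigma(j)\}$, treating the self-paired case $\sigma(j)=j$ separately, shows the two formulations coincide. Since $\gcd(g_i,h_i)=1$ is exactly coprimality of $g_i$ with $(x^{m_i}-\lambda_i)/g_i$, assembling the blocks yields the theorem. I expect the main obstacle to be precisely this constacyclic characterization in the non-semisimple setting: one must argue carefully that reciprocation permutes the irreducible factors, that it preserves the multiplicities of paired factors (using $\lambda_i^2=1$), and that the exponent inequalities translate faithfully into the self-reciprocal-plus-coprime condition. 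The semisimple subcase, where all $e_j=1$ so that coprimality is automatic and only self-reciprocality survives, is a useful sanity check on the bookkeeping.
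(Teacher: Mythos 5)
Your proof is correct and takes essentially the same route as the paper's: reduce via Theorem \ref{directprod} to LCD-ness of each block $\pi_i(\mathcal{C})$, dispatch the blocks with $\lambda_i^2\neq 1$ by \cite[Corollary 3.3]{Liu2018}, and characterize the blocks with $\lambda_i^2=1$ through the condition $\mathrm{lcm}\left\{g_i, h_i^\star\right\}=x^{m_i}-\lambda_i$. The only difference is that you carry out the exponent bookkeeping (via the reciprocation involution on irreducible factors) behind that final equivalence, which the paper simply asserts.
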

\begin{proof}
Assume $\left(c_1\left(x\right),c_2\left(x\right),\ldots,c_\ell\left(x\right)\right)\in\mathcal{C}\cap\mathcal{C}^\perp$. For any $1\le i\le \ell$ such that $\lambda_i^2\ne 1$, we have $c_i\left(x\right)=0$ similar to the proof of Corollary \ref{corr}. On the other hand, for any $1\le i\le \ell$ such that $\lambda_i^2= 1$, we have $\pi_i\left(\mathcal{C}\right)$ and $\pi_i\left(\mathcal{C}\right)^\perp$ are $\lambda_i$-constacyclic codes of length $m_i$. In this case, $\pi_i\left(\mathcal{C}\right)=\langle g_i\left(x\right) \rangle$ is LCD if and only if $g_i\left(x\right)$ is self-reciprocal coprime to $\frac{x^{m_i}-\lambda_i}{g_i\left(x\right)}$. To see this, we have $\pi_i\left(\mathcal{C}\right) \cap \pi_i\left(\mathcal{C}\right)^\perp$ is $\lambda_i$-constacyclic of length $m_i$ generated by $\mathrm{lcm}\left\{g_i\left(x\right), \frac{x^{m_i}-\lambda_i}{g_i^\star\left(x\right)}\right\}$. But $\mathrm{lcm}\left\{g_i\left(x\right), \frac{x^{m_i}-\lambda_i}{g_i^\star\left(x\right)}\right\}=x^{m_i}-\lambda_i$ if and only if $g_i\left(x\right)$ is self-reciprocal coprime to $\frac{x^{m_i}-\lambda_i}{g_i\left(x\right)}$.
\end{proof}

Examples \ref{ex13}, \ref{ex14} show that the conditions in Theorem \ref{corr2} can detect LCD MT codes more than what \cite[Corollary 5.1]{Sharma2018} can detect. Specifically, Example \ref{ex13} presents an MT code for which $\frac{x^{m_i}-\lambda_i}{g_i\left(x\right)}$ are pairwise coprime but $x^{m_i}-\lambda_i$ are not, and hence it is LCD since $\lambda_i^2\ne 1$ for all $\le i\le \ell$. On the other side, Example \ref{ex14} presents an MT code for which $x^{m_i}-\lambda_i$ are pairwise coprime but $0\ne\lambda_i=\lambda_i^{-1}$. Although, \cite[Corollary 5.1]{Sharma2018} can not be applied, Theorem \ref{corr2} is applicable. 

\begin{example}
\label{ex13}
Consider the linear code $\mathcal{C}$ over $\mathbb{F}_4$ of length $n=10$, dimension $k=5$, minimum distance $d_{\mathrm{min}}=3$, and generator matrix
\begin{equation*}
G=\begin{bmatrix}
   1 & 0 & 0 & 1 & 1 & 0 & 0 & 0 & 0 & 0\\
   0 & 1 & 0&\omega&\omega^2 & 0 & 0 & 0 & 0 & 0\\
   0 & 0 & 1 & 1&\omega^2 & 0 & 0 & 0 & 0 & 0\\
   0 & 0 & 0 & 0 & 0 & 1 & 0&\omega^2&\omega^2 & 1\\
   0 & 0 & 0 & 0 & 0 & 0 & 1 & 1&\omega & 1
\end{bmatrix}.
\end{equation*}
In fact, $\mathcal{C}$ is an $\left(\omega, \omega \right)$-MT code over $\mathbb{F}_4$ with block lengths $\left(5,5\right)$ and generated by
\begin{equation*}
\begin{split}
\mathbf{g}&=\left( 1+x+ \omega^2 x^2, 1 + x+ \omega x^2+ x^3 \right).
\end{split}
\end{equation*}
That is, $\ell=2$, $m_1=m_2=5$, $\lambda_1=\lambda_2=\omega$. In fact, \cite[Corollary 5.1]{Sharma2018} is not applicable to $\mathcal{C}$ because $x^{m_1}-\lambda_1$ is not coprime to $x^{m_2}-\lambda_2$ in $\mathbb{F}_4[x]$. However, $\mathcal{C}$ is LCD by Theorem \ref{corr2}. To see this, we have $g_1\left(x\right)=\mathrm{gcd}\left\{1+x+ \omega^2 x^2, x^5-\omega\right\}=\omega+\omega x+x^2$ and $g_2\left(x\right)=\mathrm{gcd}\left\{1 + x+ \omega x^2+ x^3, x^5-\omega\right\}=\left(x+\omega^2\right)\left( x^2+x+\omega\right)$. Since 
$$\frac{x^{m_1}-\lambda_1}{g_1\left(x\right)}=  \left(x+\omega^2\right) \left(x^2+x+\omega \right)\quad \text{and}\quad \frac{x^{m_2}-\lambda_2}{g_2\left(x\right)}=x^2+\omega x+\omega$$
are coprime, Theorem \ref{corr2} shows that $\mathcal{C}$ is LCD because $\lambda_i^2\ne 1$ for $i=1,2$. This example shows the superiority of Theorem \ref{corr2} over \cite[Corollary 5.1]{Sharma2018}, since \cite[Corollary 5.1]{Sharma2018} is not able to conclude that $\mathcal{C}$ is LCD in this case. As a linear code generated by $G$, one can show that $\mathcal{C}$ is LCD using \cite[Theorem 2.4]{Liu2018}, this is because $GG^T$ is nonsingular. Moreover, Theorem \ref{directprod} ensures that $\mathcal{C}= \pi_1\left(\mathcal{C}\right) \oplus  \pi_2\left(\mathcal{C}\right)$, where $\pi_1\left(\mathcal{C}\right)$ and $\pi_2\left(\mathcal{C}\right)$ are the $\omega$-constacyclic MDS code of length $5$ over $\mathbb{F}_4$ generated by $g_1\left(x\right)$ and $g_2\left(x\right)$, respectively. 
\end{example}

\begin{example}
\label{ex14}
Consider the linear code $\mathcal{C}$ over $\mathbb{F}_3$ of length $n=12$, dimension $k=10$, minimum distance $d_{\mathrm{min}}=2$, and generator matrix
\begin{equation*}
G=\begin{bmatrix}
 1&0&0&0&2&0&0&0&0&0&0&0\\
 0&1&0&0&1&0&0&0&0&0&0&0\\
 0&0&1&0&2&0&0&0&0&0&0&0\\
 0&0&0&1&1&0&0&0&0&0&0&0\\
 0&0&0&0&0&1&0&0&0&0&0&2\\
 0&0&0&0&0&0&1&0&0&0&0&2\\
 0&0&0&0&0&0&0&1&0&0&0&2\\
 0&0&0&0&0&0&0&0&1&0&0&2\\
 0&0&0&0&0&0&0&0&0&1&0&2\\
 0&0&0&0&0&0&0&0&0&0&1&2
\end{bmatrix}.
\end{equation*}
In fact, $\mathcal{C}$ is an $\left(2,1 \right)$-MT code over $\mathbb{F}_3$ with block lengths $\left(5,7\right)$ and generated by
\begin{equation*}
\begin{split}
\mathbf{g}&=\left( 1+x+x^2+2x^3+x^4, 1+2x^2+2x^3+x^4+2x^5+x^6  \right).
\end{split}
\end{equation*}
That is, $\ell=2$, $m_1=5$, $m_2=7$, $\lambda_1=2$, and $\lambda_2=1$. In fact, \cite[Corollary 5.1]{Sharma2018} is not applicable to $\mathcal{C}$ because $\lambda_i =\lambda_i^{-1}$. However, $\mathcal{C}$ is LCD by Theorem \ref{corr2}. To see this, we have 
\begin{equation*}\begin{split}
g_1\left(x\right)&=\mathrm{gcd}\left\{1+x+x^2+2x^3+x^4, x^5-2\right\}= x+1,\\
g_2\left(x\right)&=\mathrm{gcd}\left\{1+2x^2+2x^3+x^4+2x^5+x^6, x^7-1\right\}= x-1.
\end{split}\end{equation*}
Since 
$$\frac{x^{m_1}-\lambda_1}{g_1\left(x\right)}= x^4-x^3+x^2-x+1  \quad \text{and}\quad \frac{x^{m_2}-\lambda_2}{g_2\left(x\right)}= x^6+x^5+x^4+x^3+x^2+x+1$$
are coprime, Theorem \ref{corr2} states that $\mathcal{C}$ is LCD if and only if 
\begin{enumerate}
\item $g_1\left(x\right)$ is self-reciprocal coprime to $\frac{x^{m_1}-\lambda_1}{g_1\left(x\right)}$ since $\lambda_1^2=1$, and
\item $g_2\left(x\right)$ is self-reciprocal coprime to $\frac{x^{m_2}-\lambda_2}{g_2\left(x\right)}$ since $\lambda_2^2=1$.
\end{enumerate}
Both conditions are satisfied, hence $\mathcal{C}$ is LCD. This example shows the superiority of Theorem \ref{corr2} over \cite[Corollary 5.1]{Sharma2018}, since \cite[Corollary 5.1]{Sharma2018} is not able to conclude that $\mathcal{C}$ is LCD in this case because $\lambda_i^2=1$. As a linear code generated by $G$, one can show that $\mathcal{C}$ is LCD using \cite[Theorem 2.4]{Liu2018}, this is because $GG^T$ is nonsingular. Moreover, Theorem \ref{directprod} ensures that $\mathcal{C}= \pi_1\left(\mathcal{C}\right) \oplus  \pi_2\left(\mathcal{C}\right)$, where $\pi_1\left(\mathcal{C}\right)$ is the $2$-constacyclic MDS code of length $5$ over $\mathbb{F}_3$ generated by $g_1\left(x\right)$ and $\pi_2\left(\mathcal{C}\right)$ is the cyclic MDS code of length $7$ over $\mathbb{F}_3$ generated by $g_2\left(x\right)$.
\end{example}

\begin{remark}
Although Theorem \ref{corr2} generalizes \cite[Corollary 5.1]{Sharma2018}, we remark that the conditions there are sufficient but not necessary. For instance, 
\begin{enumerate}
\item Consider the code $\mathcal{C}$ and its dual $\mathcal{C}^\perp$ given in Examples \ref{ex-1} and \ref{ex-9}, respectively. Neither $\mathcal{C}$ nor $\mathcal{C}^\perp$ satisfies the coprimality condition of Theorem \ref{corr2}. For instance, $\mathcal{C}^\perp$ is $\left(3,2\right)$-MT over $\mathbb{F}_5$ with block lengths $\left(3,9\right)$ generated by $\mathbf{g}= \left(1+3x+4x^2,3+4x+2x^2+x^3+3x^4+4x^5+2x^6+x^7+3x^8\right)$. Example \ref{ex-9} shows that $ g_1\left(x\right)= 4+2x+x^2 $ and $g_2\left(x\right)= 1+3x+4x^2+2x^3+x^4+3x^5+4x^6+2x^7+x^8 $. Then the coprimality condition of Theorem \ref{corr2} is not satisfied because
$$\frac{x^{3}-3}{g_1\left(x\right)}= \frac{x^{9}-2}{g_2\left(x\right)}= x+3.$$
However, Example \ref{ex-1} shows that $\mathcal{C}$ is not LCD. 
\item Consider the code $\mathcal{C}$ and its dual $\mathcal{C}^\perp$ given in Examples \ref{ex5} and \ref{ex10}, respectively. Neither $\mathcal{C}$ nor $\mathcal{C}^\perp$ satisfies the coprimality condition of Theorem \ref{corr2}. For instance, $\mathcal{C}$ is $\left(3,3\right)$-MT over $\mathbb{F}_5$ with block lengths $\left(5,5\right)$ generated by $\mathbf{g}= \left( 3+ 2x + x^2 +x^3  ,  4+ 2x+ 2x^2 +2x^4 \right)$. The coprimality condition of Theorem \ref{corr2} is not satisfied because $\pi_1\left(\mathcal{C}\right)=\pi_2\left(\mathcal{C}\right)=\langle (x+2)^3 \rangle=\langle g_i\left(x\right)\rangle$. Thus 
$$\frac{x^{m_1}-\lambda_1}{g_1\left(x\right)}= \frac{x^{m_2}-\lambda_2}{g_2\left(x\right)}=(x+2)^2.$$
However, Example \ref{ex5} shows that $\mathcal{C}$ is not LCD. 

\item Consider the $\left(\omega, \omega \right)$-MT code $\mathcal{C}$ over $\mathbb{F}_4$ with block lengths $\left(5,5\right)$, dimension $k=5$, minimum distance $d_{\mathrm{min}}=5$, and generated by
\begin{equation*}
\begin{split}
\mathbf{g}&=\left( x+x^2+x^3+\omega^2 x^4,   1  + \omega x^2 + \omega x^3+ \omega^2 x^4 \right).
\end{split}
\end{equation*}
One can observe that $\pi_1\left(\mathcal{C}\right)=\pi_2\left(\mathcal{C}\right)=\pi_1\left(\mathcal{C}^\perp\right)=\pi_2\left(\mathcal{C}^\perp\right)=\langle 1\rangle$. Although the coprimality condition of Theorem \ref{corr2} is not satisfied for $\mathcal{C}$ and $\mathcal{C}^\perp$, $\mathcal{C}$ is LCD. To see this, we have $GG^T$ is nonsingular, and hence $\mathcal{C}$ is LCD by \cite[Theorem 2.4]{Liu2018}. 
\end{enumerate}
\end{remark}

\section{MT code dimension for $\rho\ge 1$}
\label{Sdimension}
In this section, we strengthen \cite[Theorem 5.1(b)]{Sharma2018} by proposing a generalized dimension formula that extends its applicability to $\rho$-generator MT code for any $\rho\ge 1$. Unlike \cite[Corollary 3.1]{Chauhan2021}, our result does not require a normalized generating set for the MT code.

\begin{theorem}
\label{dimension}
Let $\Lambda=\left(\lambda_1,\lambda_2,\ldots,\lambda_\ell\right)$, where $\lambda_i$ is a non-zero element of $\mathbb{F}_q$ for $1\le i\le \ell$. Let $\mathcal{C}$ be a $\Lambda$-MT code of length $n$. If $\mathcal{C}$ is generated by $\left\{\mathbf{g}_1, \mathbf{g}_2, \ldots, \mathbf{g}_\rho\right\}$, then the dimension of $\mathcal{C}$ as an $\mathbb{F}_q$-vector space is $$\mathrm{dim}_{\mathbb{F}_q}\mathcal{C}=n-\deg\left(\mathfrak{d}_\ell\left(\mathbf{G}\right)\right),$$
where $\mathfrak{d}_\ell\left(\mathbf{G}\right)$ is the $\ell^\mathrm{th}$ determinantal divisor over $\mathbb{F}_q[x]$ of the matrix 
\begin{equation*}
\mathbf{G}=\begin{bmatrix}
\mathbf{g}_1\\
\vdots\\
\mathbf{g}_\rho\\
\mathrm{diag}\left[x^{m_i}-\lambda_i\right]
\end{bmatrix}.
\end{equation*}
In other words, $\mathfrak{d}_\ell$ is the greatest common divisor of all $\ell\times\ell$ minors of $\mathbf{G}$.
\end{theorem}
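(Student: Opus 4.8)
The plan is to realize $\mathcal{C}$ as a quotient of $\mathbb{F}_q[x]$-modules over the principal ideal domain $\mathbb{F}_q[x]$ and reduce the dimension count to the $\mathbb{F}_q$-dimension of the cokernel of $\mathbf{G}$, which is controlled by the Smith normal form.

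First I would set up the module picture. Recall $V=\mathbb{F}_q[x]^\ell/N$, where $N=\bigoplus_{i=1}^\ell\langle x^{m_i}-\lambda_i\rangle$ is the $\mathbb{F}_q[x]$-module generated by the rows of $\mathrm{diag}[x^{m_i}-\lambda_i]$; note $\dim_{\mathbb{F}_q}V=\sum_i m_i=n$. Lifting each generator $\mathbf{g}_j$ to a representative $\tilde{\mathbf{g}}_j\in\mathbb{F}_q[x]^\ell$, the preimage of $\mathcal{C}$ under the projection $\mathbb{F}_q[x]^\ell\to V$ is exactly the $\mathbb{F}_q[x]$-module $\tilde{\mathcal{C}}$ spanned by $\tilde{\mathbf{g}}_1,\ldots,\tilde{\mathbf{g}}_\rho$ together with the relations $(x^{m_i}-\lambda_i)e_i$; that is, $\tilde{\mathcal{C}}$ is the row module of $\mathbf{G}$. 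Since $N\subseteq\tilde{\mathcal{C}}$, the third isomorphism theorem gives $V/\mathcal{C}\cong\mathbb{F}_q[x]^\ell/\tilde{\mathcal{C}}$, so
$$\dim_{\mathbb{F}_q}\mathcal{C}=\dim_{\mathbb{F}_q}V-\dim_{\mathbb{F}_q}\left(\mathbb{F}_q[x]^\ell/\tilde{\mathcal{C}}\right)=n-\dim_{\mathbb{F}_q}\bigl(\operatorname{coker}\mathbf{G}\bigr),$$
where $\operatorname{coker}\mathbf{G}=\mathbb{F}_q[x]^\ell/\tilde{\mathcal{C}}$.

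The core step is to evaluate $\dim_{\mathbb{F}_q}(\operatorname{coker}\mathbf{G})$ via Smith normal form. The diagonal block $\mathrm{diag}[x^{m_i}-\lambda_i]$ has nonzero determinant $\prod_i(x^{m_i}-\lambda_i)$, so $\mathbf{G}$ has full column rank $\ell$ over $\mathbb{F}_q[x]$. Writing $\mathbf{G}=P\,S\,Q$ with $P,Q$ unimodular and $S$ carrying the invariant factors $s_1\mid s_2\mid\cdots\mid s_\ell$ (all nonzero by the rank computation), a change of coordinates shows $\operatorname{coker}\mathbf{G}\cong\bigoplus_{k=1}^\ell\mathbb{F}_q[x]/\langle s_k\rangle$. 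Hence $\dim_{\mathbb{F}_q}(\operatorname{coker}\mathbf{G})=\sum_{k=1}^\ell\deg(s_k)=\deg\!\left(\prod_{k=1}^\ell s_k\right)$.

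Finally I would invoke the standard identity linking invariant factors to determinantal divisors, namely $\mathfrak{d}_k(\mathbf{G})=s_1s_2\cdots s_k$ up to a unit; in particular $\mathfrak{d}_\ell(\mathbf{G})=\prod_{k=1}^\ell s_k$, whence $\deg\left(\prod_k s_k\right)=\deg(\mathfrak{d}_\ell(\mathbf{G}))$. Combining with the displayed identity yields $\dim_{\mathbb{F}_q}\mathcal{C}=n-\deg(\mathfrak{d}_\ell(\mathbf{G}))$, as claimed. The main obstacle is the bookkeeping in the first paragraph: one must verify carefully that the preimage of $\mathcal{C}$ is generated by the lifted generators \emph{together with} all $\ell$ diagonal relations — this is precisely why the rows $\mathrm{diag}[x^{m_i}-\lambda_i]$ are appended to $\mathbf{G}$ — and that $\mathbf{G}$ has full column rank, so that every invariant factor is nonzero and contributes its full degree to the cokernel's dimension. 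Everything else is standard structure theory over a PID.
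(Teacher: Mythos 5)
Your proof is correct, but it follows a genuinely different route from the paper's. The paper first invokes [Chauhan2021, Theorem 3.1] to replace $\{\mathbf{g}_1,\ldots,\mathbf{g}_\rho\}$ by a normalized lower-triangular generating set $\mathbf{B}=[b_{i,j}]$, obtained from $\mathbf{G}$ by unimodular row operations; it then uses the invariance of determinantal divisors to get $\mathfrak{d}_\ell(\mathbf{G})=b_{1,1}\cdots b_{\ell,\ell}$ up to a constant, and finally cites [Chauhan2021, Corollary 3.1] for the identity $\dim_{\mathbb{F}_q}\mathcal{C}=n-\sum_i\deg(b_{i,i})$. You instead identify $V/\mathcal{C}$ with $\operatorname{coker}\mathbf{G}$ via the third isomorphism theorem and compute its $\mathbb{F}_q$-dimension from the Smith normal form, using $\mathfrak{d}_\ell(\mathbf{G})=s_1\cdots s_\ell$. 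Both arguments hinge on the same algebraic fact --- invariance of $\mathfrak{d}_\ell$ under unimodular transformations --- but yours is self-contained: it re-proves the dimension formula that the paper outsources to the cited corollary, and it never needs the triangular normalized form. What the paper's version buys is brevity and an explicit link to the earlier literature whose results it is strengthening; what yours buys is independence from those results and a cleaner conceptual explanation (the degree of $\mathfrak{d}_\ell$ is literally the length of the cokernel). Your closing remark correctly isolates the two points that need care --- that the preimage of $\mathcal{C}$ is the full row module of $\mathbf{G}$ including the diagonal relations, and that full column rank forces all invariant factors to be nonzero --- and both are verified adequately.
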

\begin{proof}
From \cite[Theorem 3.1]{Chauhan2021}, a normalized generating set $\left\{\mathbf{b}_1,\ldots,\mathbf{b}_\ell\right\}$ for $\mathcal{C}$ exists in the lower triangular form. That is, $\mathbf{b}_i=\left(b_{i,1},b_{i,2},\ldots,b_{i,i},0,\ldots,0\right)$ with $b_{i,j}\in\mathbb{F}_q[x]$ and $b_{i,i}\ne 0$. Define the $\ell\times\ell$ matrix 
\begin{equation*}
\mathbf{B}=\begin{bmatrix}
\mathbf{b}_1\\
\vdots\\
\mathbf{b}_\ell
\end{bmatrix}=\left[b_{i,j}\right].
\end{equation*}
Such normalized generating set can be obtained by applying elementary row operations over $\mathbb{F}_q[x]$ to $\mathbf{G}$. In other words, there is an invertible $(\ell+\rho)\times(\ell+\rho)$ matrix $\mathbf{W}$ such that
$$\begin{bmatrix}
\mathbf{B}\\
\mathbf{0}\\
\vdots\\
\mathbf{0}
\end{bmatrix}=\mathbf{W}\mathbf{G}.$$ 
Since determinantal divisors are invariant (up to multiplication by a constant) under multiplication with invertible matrices, there is a non-zero constant $\alpha\in\mathbb{F}_q$ such that
$$\alpha \ \mathfrak{d}_\ell\left(\mathbf{G}\right)=\mathfrak{d}_\ell\left(\begin{bmatrix}
\mathbf{B}\\
\mathbf{0}\\
\vdots\\
\mathbf{0}
\end{bmatrix}\right)=\mathfrak{d}_\ell\left(\mathbf{B}\right)=\mathrm{det}\left(\mathbf{B}\right)=b_{1,1} b_{2,2}\cdots b_{\ell,\ell}.$$
Now the result follows from \cite[Corollary 3.1]{Chauhan2021} because
$$\mathrm{dim}_{\mathbb{F}_q}\mathcal{C}=n-\sum_{i=1}^{\ell}\deg\left(b_{i,i}\right)=n-\deg\left(b_{1,1} b_{2,2}\cdots b_{\ell,\ell}\right)=n-\deg\left(\mathfrak{d}_\ell\left(\mathbf{G}\right)\right).$$
\end{proof}

Theorem \ref{dimension} has the advantage over \cite[Corollary 3.1]{Chauhan2021} that it does not require the construction of a normalized generating set, instead, any generating set for $\mathcal{C}$ can be used. In addition, it generalizes \cite[Theorem 5.1(b)]{Sharma2018} as it is applicable for any $\rho$-generator MT code with $\rho \ge 1$. Moreover, it applies to arbitrary block lengths, as demonstrated in the following examples.
\begin{example}
\label{ex-11}
In Example \ref{ex-1}, we consider the $\left(2,3\right)$-MT MDS code over $\mathbb{F}_5$ with block lengths $\left(3,9\right)$ and generators
\begin{equation*}  
\begin{split}
\mathbf{g}_1&=\left( 1+4x+3x^2 , 4+2x+3x^2+x^3+x^4+x^5+x^6+3x^8 \right),\\
\mathbf{g}_2&=\left( 1+4x, 3+4x^3+2x^6+ 2x^7+x^8  \right).
\end{split}
\end{equation*}
Although $\mathcal{C}$ is a $2$-generator MT code and these generators are not forming a normalized set, we may use Theorem \ref{dimension} to determine the code dimension. We have $\mathrm{dim}_{\mathbb{F}_5}\mathcal{C}=12-\deg\left(\mathfrak{d}_2\left(\mathbf{G}\right)\right),$ where $\mathfrak{d}_2\left(\mathbf{G}\right)$ is the greatest common divisor of all $2\times 2$ minors of  
\begin{equation*}
\mathbf{G}=\begin{bmatrix}
\mathbf{g}_1\\
\mathbf{g}_2\\
\mathrm{diag}\left[x^{m_i}-\lambda_i\right]
\end{bmatrix}=\begin{bmatrix}
1+4x+3x^2 & 4+2x+3x^2+x^3+x^4+x^5+x^6+3x^8 \\
1+4x & 3+4x^3+2x^6+ 2x^7+x^8 \\
x^{3}-2 & 0\\
0 & x^{9}-3
\end{bmatrix}.
\end{equation*}
All $2\times 2$ minors are $3x^{10}+3x^9+2x^8+x^7+2 x^6+2 x^5+x^4+x^3+3 x^2+4x+4$, $2 x^{11}+4x^9+4x^7+x^6+4x^5+3 x^3+x^2+4x+3$, $3 x^{11}+4x^{10}+x^9+x^2+3 x+2$, $4x^{11}+3 x^{10}+3 x^9+2 x^8+4x^7+1$, $4x^{10}+x^9+3 x+2$, and $x^{12}+3 x^9+2 x^3+1$. Then $\mathfrak{d}_2\left(\mathbf{G}\right)=x+2$, which shows that $\mathrm{dim}_{\mathbb{F}_5}\mathcal{C}=11$.
\end{example}

\begin{example}
\label{ex12}
In Example \ref{ex6}, we consider the single generator $\left(\omega^2,\omega\right)$-MT code over $\mathbb{F}_4$ with block lengths $\left(4,8\right)$ generated by
\begin{equation*}  
\begin{split}
\mathbf{g}&=\left( 1 ,  \omega+ \omega x + x^2 + \omega^2 x^3 + \omega^2 x^4 + \omega^2 x^5 + \omega x^6 +x^7 \right).
\end{split}
\end{equation*}
We use Theorem \ref{dimension} to determine the code dimension. We have $\mathrm{dim}_{\mathbb{F}_4}\mathcal{C}=12-\deg\left(\mathfrak{d}_2\left(\mathbf{G}\right)\right),$ where $\mathfrak{d}_2\left(\mathbf{G}\right)$ is the greatest common divisor of all $2\times 2$ minors of  
\begin{equation*}
\mathbf{G}=\begin{bmatrix}
\mathbf{g}\\
\mathrm{diag}\left[x^{m_i}-\lambda_i\right]
\end{bmatrix}=\begin{bmatrix}
1 &  \omega+ \omega x + x^2 + \omega^2 x^3 + \omega^2 x^4 + \omega^2 x^5 + \omega x^6 +x^7  \\
x^{4}-\omega^2 & 0\\
0 & x^{8}-\omega
\end{bmatrix}.
\end{equation*}
All $2\times 2$ minors are $x^{11}+\omega x^{10}+\omega^2 x^9+\omega^2 x^8+\omega x^3+\omega^2 x^2+x+1$, $x^8+\omega$, and $x^{12}+\omega^2 x^8+\omega x^4+1$. Then $\mathfrak{d}_2\left(\mathbf{G}\right)=x^8+\omega$, which shows that $\mathrm{dim}_{\mathbb{F}_4}\mathcal{C}=4$.
\end{example}

\section*{Acknowledgment}
This research was conducted at Université d'Artois, La Faculté Jean Perrin in Lens, and was funded by the Science, Technology \& Innovation Funding Authority (STDF); International Cooperation Grants, project number 49294. Ramy Takieldin would like to express his deepest gratitude to Professor André Leroy for his invaluable guidance throughout this project.

\bibliographystyle{plain}

\end{document}